\documentclass[10pt,twocolumn]{article}

\usepackage[letterpaper,left=0.70in,right=0.70in,top=0.72in,bottom=0.62in]{geometry}
\usepackage{amsmath,amssymb}
\usepackage{graphicx}
\usepackage{booktabs}
\usepackage{array}
\usepackage{microtype}
\usepackage{enumitem}
\usepackage{url}
\usepackage[T1]{fontenc}

\setlist[itemize]{leftmargin=1.15em,itemsep=1pt,topsep=2pt,parsep=0pt}
\setlist[enumerate]{leftmargin=1.45em,itemsep=0pt,topsep=2pt,parsep=0pt}

\newcommand{\indicator}{\mathbf{1}}

\newtheorem{observation}{Observation}
\newtheorem{proposition}{Proposition}

\title{\vspace{-1.7em}\textbf{Beyond Edge Cuts: Activity-Weighted Multicast\\Hypergraph Mapping\\for Spiking Neural Networks on Mesh NoCs}}
\author{Amirreza Khorasanian\\
Electrical Engineering Student, Department of Electrical and Computer Engineering\\
University of Tehran, Tehran, Iran\\
\texttt{akh5793@gmail.com}}
\date{Extended preprint, August 2026}

\begin{document}
\maketitle
\vspace{-1.2em}

\begin{abstract}
Mapping spiking neural networks (SNNs) onto neuromorphic many-core platforms is often formulated with graph partitioning and pairwise placement costs. That abstraction is convenient, but it does not match the physical communication event: one spike from a source neuron is delivered to a \emph{set} of postsynaptic destinations, and routes to several destinations can share mesh links. We present M-HySMap, a route-aware, activity-weighted multicast hypergraph mapping framework. Each source neuron induces a directed hyperedge to its postsynaptic fanout, weighted by profiled activity. The mapper starts from strong activity-aware graph/QAP seeds and then optimizes distinct destination-core fanout, the union of deterministic mesh routes, and link congestion. The central algorithmic observation is locality: moving one neuron can change only its own source-rooted hyperedge and the hyperedges of its predecessors. This permits exact incremental gain evaluation while caching every unaffected route contribution. We expose this combinatorial structure in detail, derive a conservative placement lower bound, and describe a portfolio of partition and placement neighborhoods that preserves the best incumbent. Across a 115-job evidence suite on Potjans-inspired recurrent SNNs and mesh NoCs from $4\times4$ to $6\times6$, plus a $7\times7$ stress case, M-HySMap reduces routed multicast hops by 10.6--19.6\% over Activity+QAP and 19.7--41.1\% over Edge+QAP. Incremental updates accelerate refinement by 4.7--12.7$\times$ while matching full recomputation to numerical precision.
\end{abstract}

\section{Introduction}
Spiking neural networks communicate through sparse, event-driven spikes rather than dense synchronous activations~[8, 10]. On a neuromorphic many-core system, neurons must be assigned to hardware cores and inter-core spikes must traverse an on-chip network. Mapping therefore couples two classical combinatorial decisions: \emph{partitioning}, which decides which neurons share a core, and \emph{placement}, which decides where each logical core sits on the physical mesh.

Most mapping flows begin with an ordinary directed graph: vertices are neurons and edges are synapses. Partitioning then minimizes a cut objective, and placement often minimizes a pairwise distance-weighted communication objective. This is attractive because mature graph-partitioning and quadratic-assignment machinery can be reused~[5, 3, 6]. The difficulty is that the graph edge is not the hardware communication object.

A spike is emitted once by a source neuron and delivered to a destination \emph{set}. If many postsynaptic targets lie on the same remote core, the network need not treat them as independent source-to-target packets. Moreover, routes to several destination cores can share links. Consequently, two mappings with identical synaptic edge cut can induce different destination fanout, routed-hop counts, and bottleneck loads. Figure~1 shows the simplest form of this mismatch.

M-HySMap changes the optimized communication object without discarding strong classical methods. It first obtains an activity-aware graph partition and a pairwise QAP placement, then refines that mapping under a route-aware multicast objective. The expanded treatment in this preprint emphasizes four algorithmic ideas:
\begin{itemize}
\item \textbf{A representation change.} A source neuron and all of its postsynaptic targets form one directed, activity-weighted hyperedge.
\item \textbf{A route-union objective.} Cost is assigned to the union of mesh links used by the multicast event, so shared path prefixes are counted once per source spike.
\item \textbf{Exact local dependency.} A move of neuron $v$ changes only the hyperedge rooted at $v$ and hyperedges rooted at predecessors of $v$; all other cached contributions remain valid.
\item \textbf{A portfolio over neighborhoods.} Partition moves, core-location swaps, and alternating joint schedules explore complementary local neighborhoods, and the final mapping is the best incumbent under one common objective.
\end{itemize}

\begin{figure*}[t]
\centering
\includegraphics[width=0.95\textwidth]{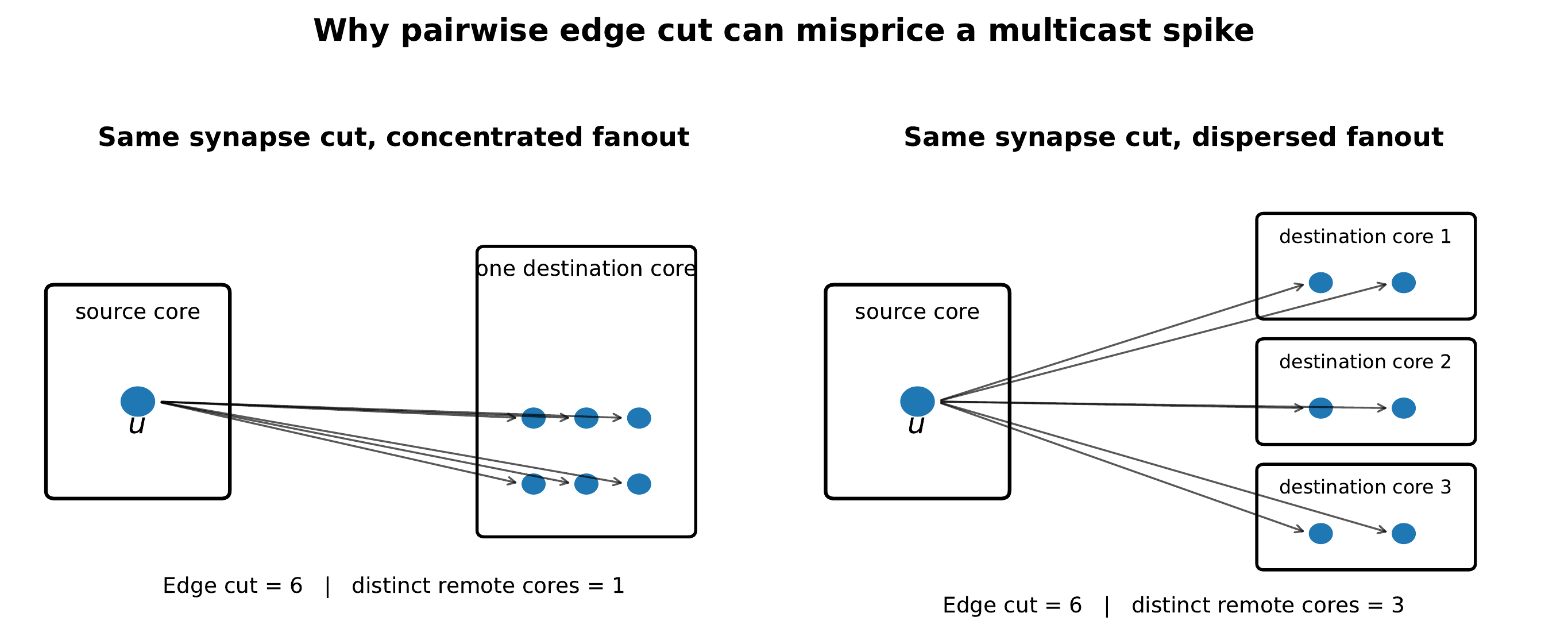}
\caption{Edge cut counts remote synapses independently. A multicast-aware view first collapses targets onto distinct destination cores; therefore equal edge cut need not imply equal network traffic.}
\end{figure*}

A shorter version of this work was prepared for ASP-DAC 2027. This extended preprint adds the combinatorial derivations, detailed search neighborhoods, incremental-update invariants, and worked examples that are difficult to fit within a conference page limit. It does not add unsupported hardware timing or energy claims.

\section{The Combinatorial Anatomy of SNN Mapping}
\subsection{Two coupled discrete spaces}
Let $G=(V,E)$ be a directed SNN graph with $n=|V|$ neurons. Let the hardware contain $k$ logical cores placed on a $R\times C$ mesh. A mapping has two components:
\begin{align}
p &: V\rightarrow\{0,\ldots,k-1\}, \tag{1}\\
\pi &: \{0,\ldots,k-1\}\rightarrow \mathcal{L}, \tag{2}
\end{align}
where $p$ assigns each neuron to a core and $\pi$ injectively assigns each logical core to a physical mesh location.

Ignoring balance and symmetries, the partition space contains $k^n$ labeled assignments. If exactly $k$ mesh locations are used, placement contributes $k!$ permutations. The raw joint space is therefore on the order of
\begin{equation}
k^n k!, \tag{3}
\end{equation}
which is already prohibitive for modest $n$ and $k$. The algorithm must therefore exploit structure rather than enumerate mappings.

Partitioning and placement are coupled. Moving a neuron can change which destination cores a source reaches; swapping two core locations can change the path lengths and shared links of many multicast events. M-HySMap addresses this coupling with alternating local neighborhoods rather than attempting a monolithic exact search.

\subsection{Why a graph cut can disagree with multicast traffic}
For a conventional directed edge cut,
\begin{equation}
C_{\mathrm{cut}}(p)=\sum_{(u,v)\in E}\indicator[p(u)\neq p(v)]. \tag{4}
\end{equation}
An activity-weighted variant replaces each unit contribution with the source activity $r_u$. Both objectives count remote synapses independently.

For source $u$, however, the physical first-order object is the set of distinct remote destination cores,
\begin{equation}
D_u(p)=\{p(v): v\in N^+(u),\ p(v)\neq p(u)\}. \tag{5}
\end{equation}
This is a set rather than a synapse count. If twenty targets land on one remote core, that core occurs once in $D_u$.

\begin{observation}[Many-to-one collapse]
For a fixed source $u$, edge cut depends on the number of remote target neurons, whereas multicast destination fanout depends on $|D_u|$. Therefore two mappings may have the same edge cut but different multicast packet replication and route cost.
\end{observation}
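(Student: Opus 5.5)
The observation has two parts: a structural claim about what each quantity depends on, and an existence claim that two mappings can agree on edge cut yet differ in multicast cost. I would prove the first part directly from the definitions and the second by an explicit small construction.

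For the structural part, I fix a source $u$ and split the cut in Eq.~(4) by source:
\[
C_{\mathrm{cut}}(p)=\sum_u c_u(p), \qquad c_u(p)=\bigl|\{v\in N^+(u): p(v)\neq p(u)\}\bigr|,
\]
which counts remote target neurons. The set $D_u(p)$ in Eq.~(5) is the image of that remote-target set under $p$. The map onto the image only forgets multiplicities, so $|D_u(p)|\le c_u(p)$. Equality holds exactly when the remote targets of $u$ are injective into cores, and the gap $c_u(p)-|D_u(p)|$ measures the many-to-one collapse. Under replicate-at-source or per-destination-core packet semantics, packet replication for one spike of $u$ equals $|D_u(p)|$, not $c_u(p)$. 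The same holds for the activity-weighted variant after multiplying by $r_u$.

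For the existence part, I would give a concrete witness. Take one source $u$ on core $0$ with two targets $v_1,v_2$, and $k\ge 3$ cores placed on a path of the mesh.
\begin{itemize}
\item Mapping $A$: $p(v_1)=p(v_2)=1$. Then $c_u=2$ and $|D_u|=1$.
\item Mapping $B$: $p(v_1)=1$, $p(v_2)=2$. Then $c_u=2$ and $|D_u|=2$.
\end{itemize}
The cuts are equal, but the destination fanouts differ. To also separate route cost, I place core $1$ adjacent to core $0$ and core $2$ off the route from $0$ to $1$, for example on the other side of core $0$ in a row of the mesh. Mapping $A$ then uses one routed link. Mapping $B$ uses two disjoint links, so its route union has two links. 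This differs even if one counts shared prefixes once.

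If other neurons are present, I would hold their assignments fixed and connect them only internally, so that they contribute identically to both mappings. Some care is needed to make the example satisfy any balance constraint and to keep the placement $\pi$ injective. Padding with isolated neurons to equalize core loads handles this.

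The main obstacle is not mathematical depth but pinning down the cost model precisely enough that "route cost" is well defined before the paper formally introduces routing. I would state the assumption of deterministic dimension-order routing with cost equal to the number of links in the route union, and note that the witness differs under any reasonable variant.
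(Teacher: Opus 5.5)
Your argument is correct and is essentially the paper's own justification: the paper treats the observation as definitional ($D_u$ is a set, so remote targets sharing a core collapse to one element). It backs this with the $2\times2$ worked example in Section~3.2, where four remote targets on cores $1$ and $3$ are consolidated onto core $1$, keeping the cut at $4$ while $|D_u|$ drops from $2$ to $1$ and the route union from two links to one. Your three-cores-in-a-row witness is the same idea, and fixing the placement explicitly makes the route-cost gap strict rather than the paper's hedged ``may shrink.''
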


This distinction motivates the directed hyperedge
\begin{equation}
h_u=(u\rightarrow N^+(u)), \tag{6}
\end{equation}
weighted by profiled source activity $r_u\ge 0$. Recent work has independently argued for source-rooted hypergraph abstractions in SNN mapping~[14]; our focus is the next step: optimize the routed mesh links and congestion induced by that source-rooted event.

\begin{figure}[t]
\centering
\includegraphics[width=0.98\columnwidth]{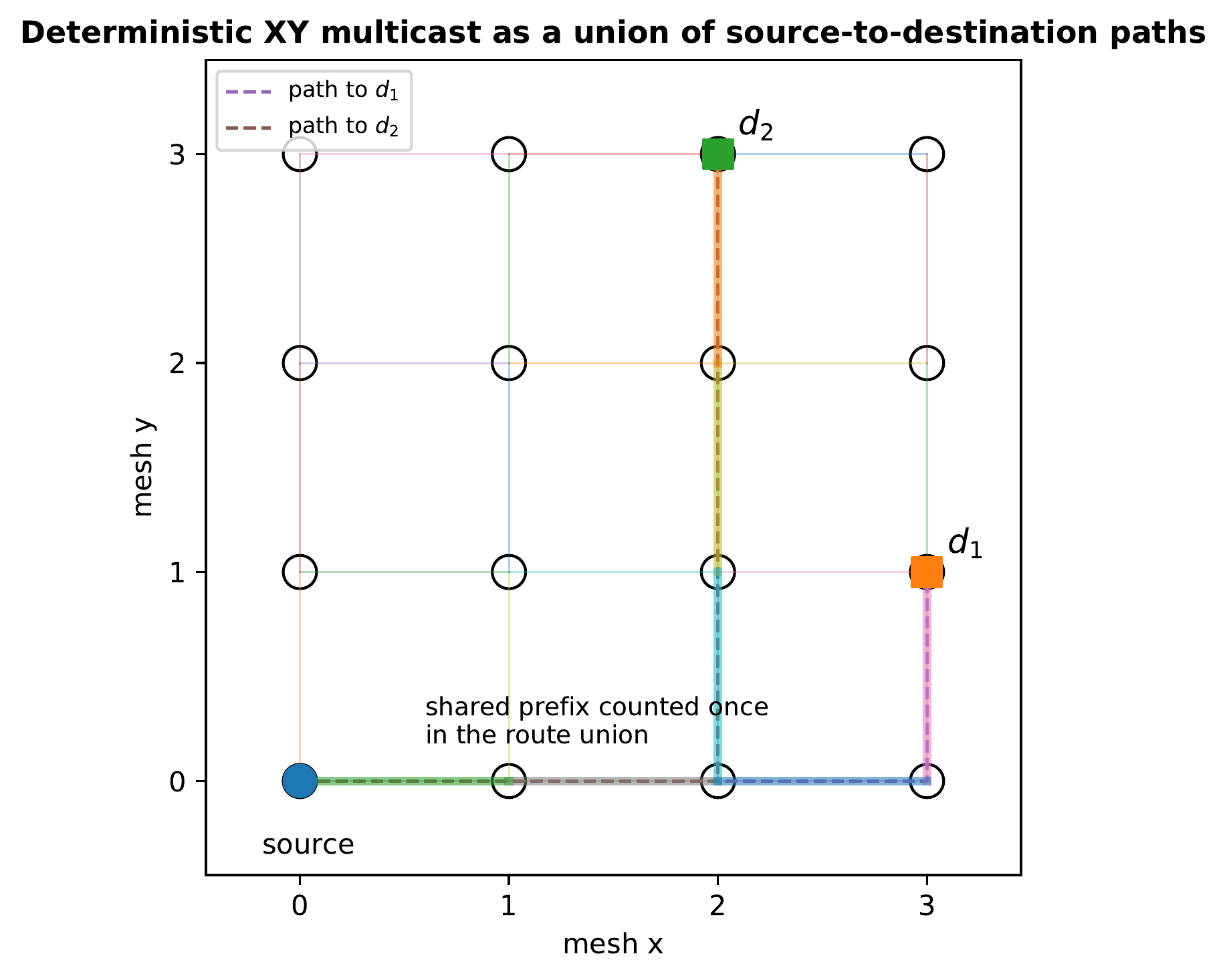}
\caption{Route-aware multicast cost uses the union of deterministic XY paths. Shared links are counted once per source spike. The union is the exact route set under the fixed routing policy; it is not claimed to be a minimum Steiner tree.}
\end{figure}

\section{Route-Aware Multicast Objective}
\subsection{From destination cores to routed-link unions}
Let $\mathrm{XY}(a,b)$ denote the deterministic X-then-Y Manhattan path between mesh locations $a$ and $b$. For source $u$, define the routed-link union
\begin{equation}
T_u(p,\pi)=\bigcup_{c\in D_u(p)} \mathrm{XY}\bigl(\pi(p(u)),\pi(c)\bigr). \tag{7}
\end{equation}
The union is important: if two paths share a prefix, those links appear once in $T_u$ for that source event. Figure~2 gives a concrete example.

The primary traffic term is
\begin{equation}
C_{\mathrm{hop}}(p,\pi)=\sum_{u\in V}r_u|T_u(p,\pi)|. \tag{8}
\end{equation}
For each directed mesh link $\ell$, the accumulated activity-weighted load is
\begin{equation}
L_\ell(p,\pi)=\sum_{u\in V}r_u\indicator[\ell\in T_u(p,\pi)]. \tag{9}
\end{equation}
We optimize
\begin{align}
\mathcal{J}(p,\pi)={}&\alpha C_{\mathrm{hop}}+\beta\max_\ell L_\ell \notag\\
&+\gamma\operatorname{Var}_\ell(L_\ell)+\delta_s I_s(p)+\delta_r I_r(p). \tag{10}
\end{align}
where $I_s$ and $I_r$ are optional size- and activity-imbalance terms. In the reported experiments, $\alpha=1$, $\beta=0.10$, $\gamma=0.01$, the soft imbalance weights are disabled, and a hard 15\% neuron-count slack is imposed on candidate moves.

The three active terms play different roles. $C_{\mathrm{hop}}$ measures total weighted link traversals under the chosen routing policy; $\max_\ell L_\ell$ discourages a single bottleneck link; and the variance term weakly discourages highly uneven load distributions. These are mapping-level traffic proxies, not calibrated cycle, throughput, or energy models.

\subsection{A small worked example}
Consider a $2\times2$ mesh and source $u$ on logical core 0 with $r_u=5$. Suppose four remote postsynaptic targets occupy cores 1 and 3, two on each. Edge cut counts four remote synapses. The destination-core set is only $D_u=\{1,3\}$. If the union of XY routes contains two distinct links, then $u$ contributes $5\times2=10$ routed hops. If a move colocates the core-3 targets with core 1, then $D_u=\{1\}$ and the route union may shrink to one link, giving contribution 5. The optimization is therefore sensitive to \emph{where fanout collapses}, not only to how many edges cross a partition boundary.

\subsection{A conservative placement lower bound}
The implementation maintains a simple diagnostic lower bound for the fixed-partition placement problem. Define the weighted remote fanout
\begin{equation}
F_{\mathrm{remote}}(p)=\sum_u r_u|D_u(p)|. \tag{11}
\end{equation}
Any connected route union spanning one source location and $d$ distinct destination locations requires at least $d$ links. Hence
\begin{equation}
C_{\mathrm{hop}}(p,\pi)\ge F_{\mathrm{remote}}(p). \tag{12}
\end{equation}
If the $R\times C$ mesh has
\begin{equation}
M=2\bigl(R(C-1)+C(R-1)\bigr) \tag{13}
\end{equation}
directed adjacent links, then $\max_\ell L_\ell\ge C_{\mathrm{hop}}/M\ge F_{\mathrm{remote}}/M$.

\begin{proposition}[Safe placement lower bound]
For fixed partition $p$,
\begin{equation}
\mathcal{J}(p,\pi)\ge \alpha F_{\mathrm{remote}}(p)+\beta\frac{F_{\mathrm{remote}}(p)}{M}+\delta_s I_s(p)+\delta_r I_r(p). \tag{14}
\end{equation}
\end{proposition}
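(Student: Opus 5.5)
The plan is to bound $\mathcal{J}$ from below term by term, using the two facts just derived before the proposition: inequality (12) and the averaging bound $\max_\ell L_\ell \ge C_{\mathrm{hop}}/M$. Implicitly the weights $\alpha,\beta,\gamma$ are nonnegative, as they are in the reported experiments ($\alpha=1$, $\beta=0.10$, $\gamma=0.01$). The imbalance terms depend only on $p$, so they appear identically on both sides and need no argument.

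\textbf{Hop term.} First justify (12) carefully. For each source $u$, $T_u$ is a union of XY paths that all start at the same location $\pi(p(u))$, so it is connected. It contains $\pi(p(u))$ together with the $|D_u|$ locations $\pi(c)$. These locations are pairwise distinct, and distinct from the source location, because $\pi$ is injective and $c\neq p(u)$ for every $c\in D_u$. A connected subgraph on at least $|D_u|+1$ vertices has at least $|D_u|$ edges. Reading $T_u$ as a set of undirected links (a connected directed subgraph has at least as many arcs as its underlying undirected graph has edges), this gives $|T_u|\ge|D_u|$. Multiplying by $r_u\ge0$ and summing over $u$ yields $C_{\mathrm{hop}}\ge F_{\mathrm{remote}}$, and then $\alpha C_{\mathrm{hop}}\ge\alpha F_{\mathrm{remote}}$.

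\textbf{Bottleneck term.} Swapping the order of summation in (9) gives $\sum_\ell L_\ell=\sum_u r_u|T_u|=C_{\mathrm{hop}}$, where the sum runs over the $M$ directed links of (13). The maximum of $M$ numbers is at least their average, so $\max_\ell L_\ell\ge C_{\mathrm{hop}}/M\ge F_{\mathrm{remote}}/M$. Multiplying by $\beta\ge0$ gives the $\beta$ term of (14).

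\textbf{Variance term and conclusion.} The variance is always nonnegative and $\gamma\ge0$, so this term can be dropped. Adding the three bounds to the unchanged $\delta_sI_s+\delta_rI_r$ gives (14). The bound is independent of $\pi$, which is why it is a valid lower bound for the fixed-partition placement problem.

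The only step needing care is the connectivity argument for $|T_u|\ge|D_u|$. It relies on two facts: every path in the union shares the common source endpoint, and $\pi$ is injective, so that distinct destination cores occupy distinct locations. The rest is averaging and sign bookkeeping.
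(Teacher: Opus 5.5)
Your proof is correct and follows the same term-by-term argument as the paper: (12) for the hop term, max-at-least-average together with $\sum_\ell L_\ell = C_{\mathrm{hop}}$ for the bottleneck term, nonnegativity of the variance, and imbalance terms fixed by $p$. You also make explicit two things the paper leaves implicit: the connectivity and injectivity argument behind (12), and the nonnegativity of $\alpha,\beta,\gamma$ and $r_u$.
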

\emph{Proof.} Equation~(12) lower-bounds the hop term. The maximum of $M$ nonnegative link loads is at least their average, and the sum of loads equals $C_{\mathrm{hop}}$. The variance term is nonnegative. The partition-only imbalance terms are fixed for fixed $p$. \hfill$\square$

This bound is intentionally conservative; it is useful as a sanity check and for tiny exact-placement experiments, not as a claim of tight optimality on the reported meshes.

\section{Algorithms}
Figure~3 summarizes the search architecture. A strong pairwise solution is treated as a valuable seed rather than a straw-man baseline. The algorithm then explores neighborhoods defined by the multicast objective.

\subsection{Activity profiling and source weights}
A short event-driven profiling run estimates
\begin{equation}
r_u=\frac{\text{recorded spikes of neuron }u}{T_{\mathrm{profile}}}. \tag{15}
\end{equation}
The main experiments use $T_{\mathrm{profile}}=0.20$ s, an external input rate of 6 Hz, and a cap of 150k processed events. The same activity vector is used by all activity-aware methods, so improvements cannot be attributed to different profiling traces.

\subsection{Strong graph partition and QAP placement seed}
The seed is intentionally classical. Starting from a balanced assignment, M-HySMap first performs greedy boundary-node refinement under edge cut and then activity-weighted edge cut. For a current partition $p$, the partition-to-partition flow matrix is
\begin{equation}
F_{ij}=\sum_{\substack{(u,v)\in E\\p(u)=i,\ p(v)=j}}w_u, \tag{16}
\end{equation}
where $w_u=1$ for Edge+QAP and $w_u=r_u$ for Activity+QAP. If $D_{ab}$ is Manhattan distance between mesh locations $a$ and $b$, the pairwise placement objective is
\begin{equation}
Q(\pi)=\sum_{i,j}F_{ij}D_{\pi(i),\pi(j)}. \tag{17}
\end{equation}
For tiny $k$, the code can enumerate placements exactly after fixing one symmetry. For the reported $k\ge16$ cases, it uses an anytime multi-start greedy 2-swap search. This produces the Activity+QAP seed $(p_0,\pi_0)$.

The role of QAP is worth emphasizing. Pairwise flow times physical distance is a strong proxy; M-HySMap does not discard it. Instead, the QAP solution becomes a warm start for the more faithful route-union objective.

\subsection{Boundary-node partition neighborhood}
A neuron is a \emph{boundary node} when at least one predecessor or successor lies on a \emph{different} core. Refinement concentrates on boundary nodes because moving a completely internal node is less likely to change inter-core structure.

For boundary neuron $v$, candidate destination cores are formed from the cores occupied by its incoming and outgoing neighbors, excluding its current core. If this candidate set is very small, a small number of additional cores are sampled to preserve limited exploration. A move $v:a\rightarrow b$ is feasible only when it respects hard balance bounds
\begin{equation}
\left\lfloor(1-s)\frac{n}{k}\right\rfloor\le |p^{-1}(c)|\le\left\lceil(1+s)\frac{n}{k}\right\rceil, \tag{18}
\end{equation}
with slack $s=0.15$ in the heavy evidence suite.

Within a pass, nodes are shuffled, all feasible candidate cores for a node are scored, and the best improving move is immediately applied. Passes stop when no move improves the objective or the search budget expires. This is FM-inspired boundary refinement, but it is not claimed to reproduce the classical Fiduccia--Mattheyses bucket/locking algorithm exactly.

\subsection{The affected-hyperedge locality theorem}
The key speedup follows from a simple dependency argument. Moving neuron $v$ changes exactly one label, $p(v)$. Which source-rooted multicast objects can notice this change?

\begin{proposition}[Affected-source locality]
For a move of neuron $v$, the only source-rooted hyperedges whose destination sets or route unions can change are
\begin{equation}
\mathcal{A}(v)=\{v\}\cup\{u:(u,v)\in E\}=\{v\}\cup N^-(v). \tag{19}
\end{equation}
\end{proposition}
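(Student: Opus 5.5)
The plan is a direct dependency argument on the definitions (5) and (7). Fix $\pi$ and let $p'$ agree with $p$ except at $v$, where $p'(v)=b\neq a=p(v)$. For an arbitrary source $u$ I would write out exactly which labels $D_u$ and $T_u$ read. By (5), $D_u(p)$ depends only on $p(u)$ and on the labels $p(w)$ for $w\in N^+(u)$. By (7), $T_u(p,\pi)$ depends on $D_u(p)$, on $\pi(p(u))$, and on the fixed routing function $\mathrm{XY}$. So the set of neurons whose labels can influence hyperedge $h_u$ is exactly $\{u\}\cup N^+(u)$.

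The main step is a case split on $u\notin\mathcal{A}(v)$. Then $u\neq v$, and $v\notin N^+(u)$, because $v\in N^+(u)$ would mean $(u,v)\in E$, i.e.\ $u\in N^-(v)$. Hence every label read by $D_u$ is identical under $p$ and $p'$: we have $p'(u)=p(u)$ and $p'(w)=p(w)$ for all $w\in N^+(u)$. It follows that $D_u(p')=D_u(p)$. The source location $\pi(p'(u))=\pi(p(u))$ and all endpoints $\pi(c)$, $c\in D_u$, also coincide. Because $\mathrm{XY}$ is a deterministic function of its endpoints, every path in the union (7) is unchanged, so $T_u(p',\pi)=T_u(p,\pi)$. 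In particular, the cached contribution $r_u|T_u|$ to (8) and the per-link indicator contributions $r_u\indicator[\ell\in T_u]$ to (9) are unchanged for all such $u$.

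I would then state the converse remark that both parts of $\mathcal{A}(v)$ can genuinely change, so the set is not vacuous. For $u=v$, both the source location and the remote/local status of every target shift. For $u\in N^-(v)$, the label $p(v)$ enters the set comprehension (5), so $D_u$ may gain $b$, lose $a$, or change whether $v$ is local to $u$. The proposition only claims ``can change'', so no tightness proof is needed. One subtle point is a self-loop $(v,v)\in E$: it places $v$ in both sets, which the union handles harmlessly.

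The obstacle I expect is not mathematical depth but precision about what ``only'' covers. The aggregate quantities $\max_\ell L_\ell$ and $\operatorname{Var}_\ell(L_\ell)$ in (10) are global, so they do change whenever any $T_u$ with $u\in\mathcal{A}(v)$ changes. I would therefore phrase the conclusion carefully. The per-source objects outside $\mathcal{A}(v)$ are invariant. As a consequence, the new loads satisfy
\begin{equation*}
L_\ell(p')=L_\ell(p)+\sum_{u\in\mathcal{A}(v)}r_u\bigl(\indicator[\ell\in T_u(p')]-\indicator[\ell\in T_u(p)]\bigr).
\end{equation*}
The exact gain of (10) can then be recomputed from this updated load vector, together with $C_{\mathrm{hop}}$ adjusted by the same $|\mathcal{A}(v)|$ terms. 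The imbalance terms change only through the sizes of cores $a$ and $b$.
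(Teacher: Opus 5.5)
Your proposal is correct and follows essentially the same dependency argument as the paper. For $u\notin\mathcal{A}(v)$ you have $u\neq v$ and $v\notin N^+(u)$, so $p(u)$ and every label read by (5) are unchanged, and therefore $D_u$, $T_u$, and the load contribution of $u$ are unchanged; you also make explicit the fixed $\pi$ and the determinism of $\mathrm{XY}$, which the paper leaves implicit. One side remark, not part of the proof, is overstated: moving $v$ changes the local/remote status of at most the targets of $v$ on cores $a$ and $b$, not of every target, and the activity-imbalance term changes through the activities, not the sizes, of cores $a$ and $b$.
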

\emph{Proof.} The hyperedge $h_v$ can change because its source core $p(v)$ changes. For another source $u\neq v$, the source core $p(u)$ is unchanged. Its destination-core set depends only on labels of neurons in $N^+(u)$. If $(u,v)\notin E$, then $v\notin N^+(u)$, so none of those labels changes and $D_u$ is identical. Therefore its route union and link-load contribution are identical as well. \hfill$\square$

Figure~4 visualizes this dependency footprint. The observation is elementary, but it converts a global-looking multicast objective into a local move evaluation.

\subsection{Exact incremental gain evaluation}
For every source $u$, the cached state stores its current source core, distinct remote destination-core set, routed-link set, weighted hop contribution, and link-load contribution. Global counters store total routed hops, per-link loads, per-core neuron counts, and per-core activity.

For a tentative move $v:a\rightarrow b$:
\begin{enumerate}
\item compute $\mathcal{A}(v)$;
\item subtract old cached route contributions of affected sources from global link loads;
\item temporarily change $p(v)$;
\item recompute only affected destination sets and route unions;
\item add the new affected contributions and evaluate the full objective;
\item either commit the move or restore the state.
\end{enumerate}

\begin{figure*}[t]
\centering
\includegraphics[width=0.96\textwidth]{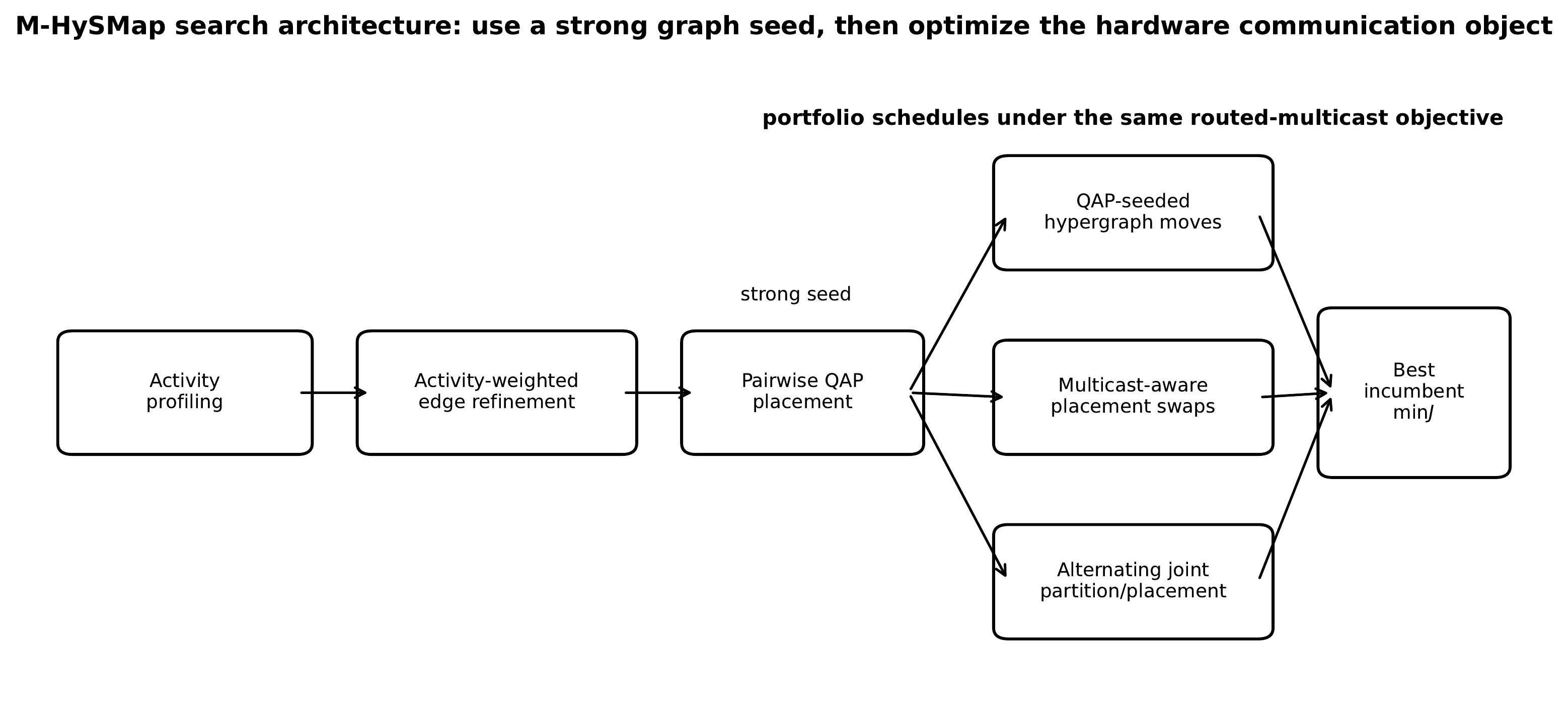}
\caption{Search architecture. Strong activity-aware graph/QAP seeds are refined by route-aware neuron moves, multicast placement swaps, and alternating joint schedules. The portfolio returns the minimum-objective incumbent among its candidates.}
\end{figure*}

\begin{figure}[t]
\centering
\includegraphics[width=0.98\columnwidth]{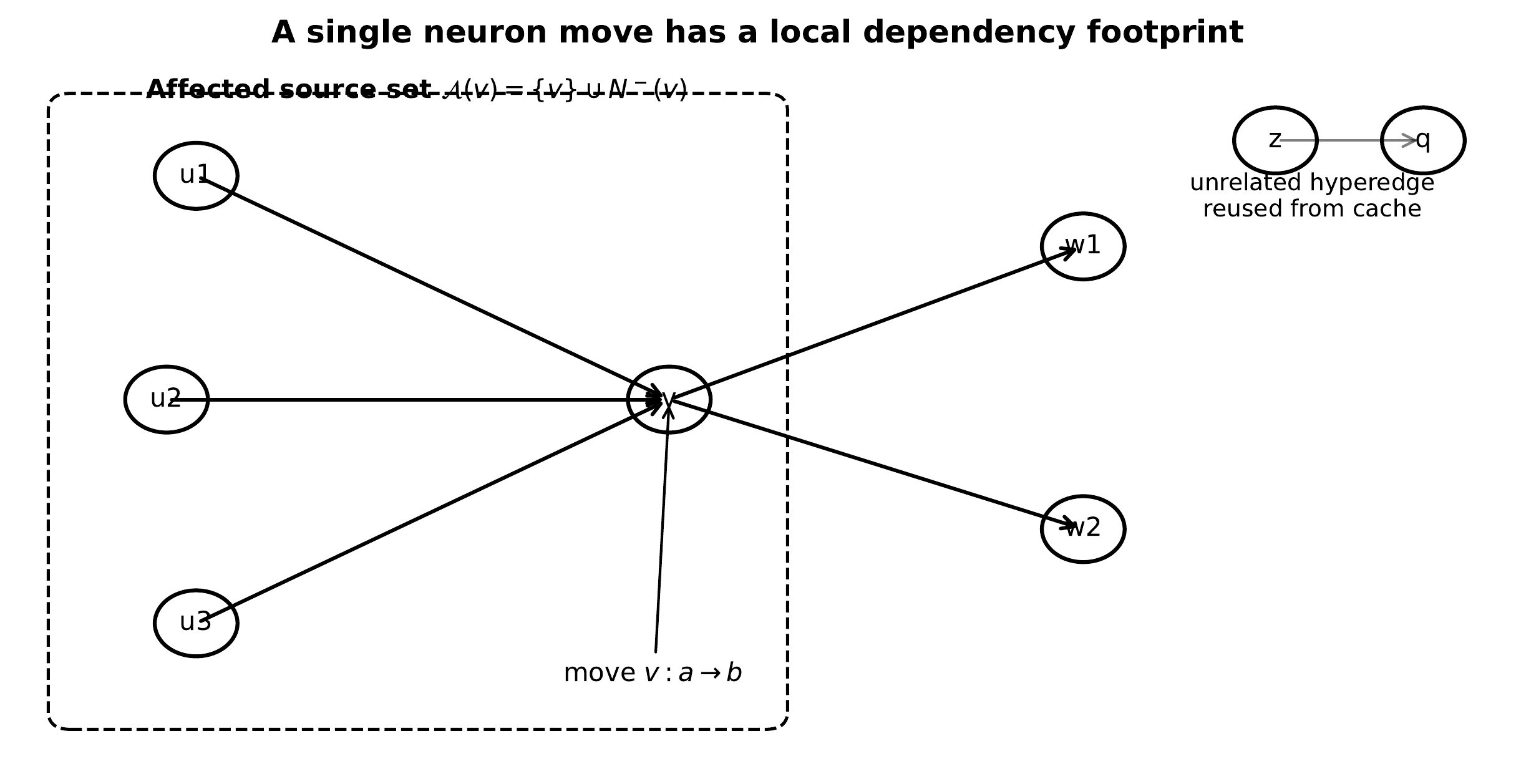}
\caption{Moving $v$ affects $h_v$ and predecessor-rooted hyperedges. Unrelated source contributions are provably unchanged and remain cached.}
\end{figure}

\refstepcounter{proposition}\noindent\textbf{Proposition \theproposition\ (Exactness of the incremental evaluator).} \emph{Assume the cached state equals the full objective decomposition before a tentative move. Replacing the old contributions of exactly $\mathcal{A}(v)$ with their recomputed post-move contributions yields the same objective value as a complete recomputation from scratch.}
\emph{Proof.} By Proposition 1, every source outside $\mathcal{A}(v)$ has an unchanged source core, destination-core set, route union, and link-load contribution. Their cached contributions therefore equal their full-recompute contributions after the move. The algorithm explicitly removes and recomputes every source in $\mathcal{A}(v)$ and then recomputes the global max-load and variance terms from the resulting load counters. Thus the aggregate state is identical to the full decomposition, up to floating-point arithmetic. \hfill$\square$

The implementation also performs a full recomputation at the end of refinement and records the absolute objective discrepancy as a validation check. The largest reported error is $2.3\times10^{-10}$ (Table~4).

\begin{figure}[t]
\small
\hrule\vspace{3pt}
\textbf{Algorithm 1} Incremental multicast partition refinement\\[-1pt]
\textbf{Require:} partition $p$, placement $\pi$, rates $r$, balance slack $s$\\
\begin{tabular}{@{}r@{\ }p{0.85\columnwidth}@{}}
1: & initialize cached routes and global link loads\\
2: & \textbf{for each} refinement pass \textbf{do}\\
3: & \quad $B\leftarrow$ shuffled boundary nodes\\
4: & \quad \textbf{for} $v\in B$ \textbf{do}\\
5: & \qquad construct feasible candidate core set $C(v)$\\
6: & \qquad $A\leftarrow\{v\}\cup N^-(v)$\\
7: & \qquad \textbf{for} $b\in C(v)$ \textbf{do}\\
8: & \qquad\quad subtract cached contributions of $A$\\
9: & \qquad\quad tentatively set $p(v)\leftarrow b$\\
10: & \qquad\quad recompute routes only for sources in $A$\\
11: & \qquad\quad evaluate complete $\mathcal{J}$ and restore tentative state\\
12: & \qquad \textbf{end for}\\
13: & \qquad commit the best strictly improving feasible move, if any\\
14: & \quad \textbf{end for}\\
15: & \quad \textbf{if} no move was committed \textbf{then}\\
16: & \qquad \textbf{break}\\
17: & \quad \textbf{end if}\\
18: & \textbf{end for}
\end{tabular}
\vspace{3pt}\hrule
\end{figure}

\subsection{What the complexity reduction really is}
A full candidate evaluation must rebuild destination-core sets and route unions across the SNN. Abstractly, if $H=R+C-2$ is the mesh diameter, a global evaluation is proportional to the work required to scan all fanouts and form all source route unions, plus $O(M)$ work to summarize link loads.

The incremental evaluator replaces the global source scan with only $u\in\mathcal{A}(v)$, plus the same small mesh-link summary. A useful qualitative expression is
\begin{align}
T_{\mathrm{inc}}(v)&=O(M)+O\!\left(\sum_{u\in\mathcal{A}(v)}\bigl(\deg^+(u)+|D_u|H\bigr)\right), \tag{20}\\
T_{\mathrm{full}}&=O(M)+O\!\left(\sum_{u\in V}\bigl(\deg^+(u)+|D_u|H\bigr)\right). \tag{21}
\end{align}
This is not a claim that every implementation operation is constant-time; for example, link-load counters must still be copied or summarized. The structural saving is that unrelated source-rooted multicast objects are never re-routed. Empirically, only 2.32--2.71 sources are affected on average in the reported configurations, which aligns with the observed 4.7--12.7$\times$ speedup.

\subsection{Multicast-aware placement neighborhood}
For fixed partition $p$, the placement neighborhood consists of 2-swaps of logical core locations. Starting from a permutation $\pi$, a candidate swap of cores $i$ and $j$ is accepted when it strictly decreases the same multicast objective $\mathcal{J}(p,\pi)$. The search uses several initial permutations: strong pairwise-QAP placements first, then row-major and randomized restarts if budget remains.

This is an anytime heuristic: at every point it has a feasible incumbent, and additional time permits more starting points and swaps. For very small $k$, exhaustive permutation enumeration is available to certify the placement optimum in a symmetry-normalized space; this exact path is a diagnostic feature rather than the mode used in the main experiments.

\subsection{Alternating joint refinement}
Partition and placement influence one another, so M-HySMap also alternates them. A joint cycle:
\begin{enumerate}
\item improves placement under the multicast objective, seeded by the current placement and a fresh pairwise-QAP proposal;
\item refines the partition with incremental multicast-aware neuron moves under that physical placement;
\item preserves the best $(p,\pi)$ encountered.
\end{enumerate}
A final placement polish is applied to the best partition. The heavy runs use two joint cycles.

\subsection{Portfolio incumbent selection}
Different local-search schedules can settle in different basins. M-HySMap therefore evaluates a compact portfolio including QAP-seeded hypergraph refinement, multicast-aware placement from strong seeds, and joint partition-placement schedules with different emphasis. Let $\mathcal{P}$ be the set of resulting candidates. The final result is simply
\begin{equation}
(p^\star,\pi^\star)=\arg\min_{(p,\pi)\in\mathcal{P}}\mathcal{J}(p,\pi). \tag{22}
\end{equation}

\begin{proposition}[Safe-best property]
The selected portfolio objective is no worse than the objective of any candidate explicitly included in $\mathcal{P}$.
\end{proposition}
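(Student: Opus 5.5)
The argument follows directly from the definition of the selection rule in Eq.~(22). Since this is essentially a property of the argmin, I will structure it as follows.

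First, I will fix the setting. The portfolio $\mathcal{P}$ is a finite, nonempty set of candidate mappings $(p,\pi)$. It is finite because the portfolio runs a fixed, compact list of schedules, each returning one incumbent. It is nonempty because at least the QAP-seeded refinement returns its seed or an improvement of it. Every candidate is scored by the same function $\mathcal{J}$ of Eq.~(10), with the same weights $\alpha,\beta,\gamma,\delta_s,\delta_r$ and the same activity vector $r$. This gives a single real-valued objective on a common domain, and it is what makes comparison across candidates meaningful.

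Second, I will observe that a minimum of $\mathcal{J}$ over a finite nonempty set exists. Any $(p^\star,\pi^\star)$ attaining it satisfies $\mathcal{J}(p^\star,\pi^\star)\le\mathcal{J}(p,\pi)$ for every $(p,\pi)\in\mathcal{P}$. Ties are harmless: whichever minimizer is returned has the same value. As a corollary, if the Activity+QAP seed $(p_0,\pi_0)$ is explicitly inserted into $\mathcal{P}$, the final objective is no worse than that seed's. The same holds for any candidate that is placed in $\mathcal{P}$.

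The only real obstacle is ensuring that the value compared in Eq.~(22) is the true $\mathcal{J}$ and not a drifted cached value. Candidates produced by incremental refinement carry objective values maintained by the evaluator. I will invoke Proposition~3 (exactness of the incremental evaluator), together with the final full recomputation performed at the end of refinement. Together these show that each candidate's recorded score equals $\mathcal{J}(p,\pi)$ up to floating-point error. To be safe, I will state the claim for scores obtained by full recomputation of each candidate before selection. I will then note that the incremental values agree with these scores by Proposition~3.

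Finally, I will emphasize that the statement is relative to the candidates in $\mathcal{P}$ only. It makes no claim of global optimality, and no claim of comparison against mappings outside the portfolio.
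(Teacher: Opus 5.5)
Your proposal is correct and takes the same approach as the paper: the paper's proof is just that the property is immediate from the definition of the minimum in Eq.~(22), so $\mathcal{J}(p^\star,\pi^\star)\le\mathcal{J}(p,\pi)$ for every $(p,\pi)\in\mathcal{P}$. Your remarks that $\mathcal{P}$ is finite and nonempty are harmless bookkeeping, and the appeal to Proposition~3 about cached scores is not needed, because Eq.~(22) selects by $\mathcal{J}$ itself rather than by cached values.
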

\emph{Proof.} Immediate from the definition of the minimum in Eq.~(22). \hfill$\square$

The property is modest but useful: the portfolio is a robustness device, not a probabilistic claim of global optimality. Because every accepted local move strictly reduces $\mathcal{J}$ and the feasible mapping space is finite, an unbudgeted fixed schedule cannot cycle; it terminates when its examined neighborhood contains no improving move.

\begin{table}[t]
\centering
\caption{Benchmark configurations. The $7\times7$ case is a smaller stress run.}
\footnotesize
\begin{tabular}{lrrrr}
\toprule
Config & Cores & Seeds & Neurons & Synapses\\
\midrule
0.0015, $4\times4$ & 16 & 20 & 79 & $238\pm12$\\
0.002, $4\times4$ & 16 & 20 & 108 & $442\pm19$\\
0.002, $5\times5$ & 25 & 20 & 108 & $442\pm19$\\
0.002, $6\times6$ & 36 & 20 & 108 & $442\pm19$\\
0.002, $7\times7$ & 49 & 5 & 108 & $440\pm6$\\
0.003, $4\times4$ & 16 & 10 & 163 & $1028\pm36$\\
0.003, $5\times5$ & 25 & 10 & 163 & $1028\pm36$\\
0.003, $6\times6$ & 36 & 10 & 163 & $1028\pm36$\\
\bottomrule
\end{tabular}
\end{table}

\section{Experimental Methodology}
\subsection{Workloads and hardware abstraction}
We evaluate Potjans-inspired recurrent SNN workloads preserving layered excitatory/inhibitory population structure~[9]. The workloads are deliberately scaled so that many paired mapping runs can be completed under controlled conditions. Hardware is a two-dimensional mesh with deterministic XY shortest-path routing. Table~1 lists the configurations.

\subsection{Baselines and controlled ablation}
We compare:
\begin{itemize}
\item \textbf{Edge+QAP:} edge-cut partition refinement plus pairwise QAP placement;
\item \textbf{Activity+QAP:} source-activity-weighted edge refinement plus activity-weighted pairwise QAP placement;
\item \textbf{QAP-seeded M-HySMap:} multicast refinement initialized from Activity+QAP;
\item \textbf{M-HySMap:} the full portfolio incumbent.
\end{itemize}
All methods use the same generated SNN, profile, mesh, deterministic routing rule, and workload seed for a given job. This makes the progression a representation/search ablation: activity information is already present in the strongest graph baseline; M-HySMap's remaining gain comes from optimizing the source-rooted routed communication object and from the additional refinement neighborhoods.

The heavy configuration fixes two edge-refinement passes, four hypergraph-refinement passes, five placement restarts, 0.50 s per placement call, and two joint cycles. The evidence suite contains 115 completed jobs. The primary metric is routed multicast hops; maximum link load and the composite objective provide congestion context. Incremental speedup isolates the benefit of cached move evaluation rather than claiming equal end-to-end runtime against simpler baselines.

\section{Results}
\subsection{Routed multicast traffic}
Table~2 reports the main results. M-HySMap reduces routed multicast hops by 19.7--41.1\% relative to Edge+QAP and by 10.6--19.6\% relative to the stronger Activity+QAP baseline. The improvement persists across every reported configuration, including the $7\times7$ stress case with fewer seeds.

The controlled baseline is important for interpretation. Activity+QAP already knows which sources spike more often and already places high-flow core pairs close together. The additional gain therefore cannot be explained merely by adding activity weights or topology awareness. The algorithm changes what it counts: distinct destination-core delivery and shared routed links.

\subsection{Ablation: where the gain appears}
Averaged over configurations, Table~3 shows a normalized hop cost of 1.000 for Activity+QAP, 0.857 after QAP-seeded multicast refinement, and 0.848 for the final portfolio. Most of the improvement over the strongest graph baseline appears as soon as the search begins optimizing the multicast communication object; the portfolio provides a smaller robustness improvement on top.

The averaged maximum-link load is 20.5\% below Activity+QAP and 25.5\% below Edge+QAP. Because the objective explicitly includes worst-link and variance terms, the hop reduction is not obtained by blindly concentrating traffic onto a single link.

\subsection{Incremental update speed}
Table~4 compares the same partition-refinement logic under full objective recomputation and the incremental affected-hyperedge evaluator. Speedups range from 4.7$\times$ to 12.7$\times$. The average affected-source set remains small---roughly 2.3--2.7 source hyperedges per evaluated boundary node---while the full objective validation error stays at floating-point scale.

\begin{table*}[t]
\centering
\caption{Main routed multicast hop results. Values are mean $\pm$ standard deviation. Improvements are hop reductions achieved by M-HySMap.}
\small
\begin{tabular}{lrrrrrr}
\toprule
Config & Seeds & Edge+QAP & Activity+QAP & M-HySMap & vs Edge & vs Activity\\
\midrule
0.0015, $4\times4$ & 20 & $1922\pm775$ & $1409\pm643$ & $1133\pm563$ & 41.1\% & 19.6\%\\
0.002, $4\times4$ & 20 & $8642\pm3618$ & $7332\pm3341$ & $5934\pm2677$ & 31.3\% & 19.1\%\\
0.002, $5\times5$ & 20 & $10389\pm4210$ & $8535\pm3675$ & $7309\pm3161$ & 29.6\% & 14.4\%\\
0.002, $6\times6$ & 20 & $13264\pm5563$ & $10624\pm4990$ & $9499\pm4404$ & 28.4\% & 10.6\%\\
0.002, $7\times7$ & 5 & $13819\pm4608$ & $11079\pm3489$ & $9603\pm2974$ & 30.5\% & 13.3\%\\
0.003, $4\times4$ & 10 & $113919\pm17494$ & $102637\pm16546$ & $84816\pm14631$ & 25.5\% & 17.4\%\\
0.003, $5\times5$ & 10 & $147321\pm20830$ & $135338\pm20194$ & $116539\pm17280$ & 20.9\% & 13.9\%\\
0.003, $6\times6$ & 10 & $181047\pm26287$ & $167462\pm24468$ & $145383\pm21471$ & 19.7\% & 13.2\%\\
\bottomrule
\end{tabular}
\end{table*}

\begin{figure*}[t]
\centering
\includegraphics[width=0.90\textwidth]{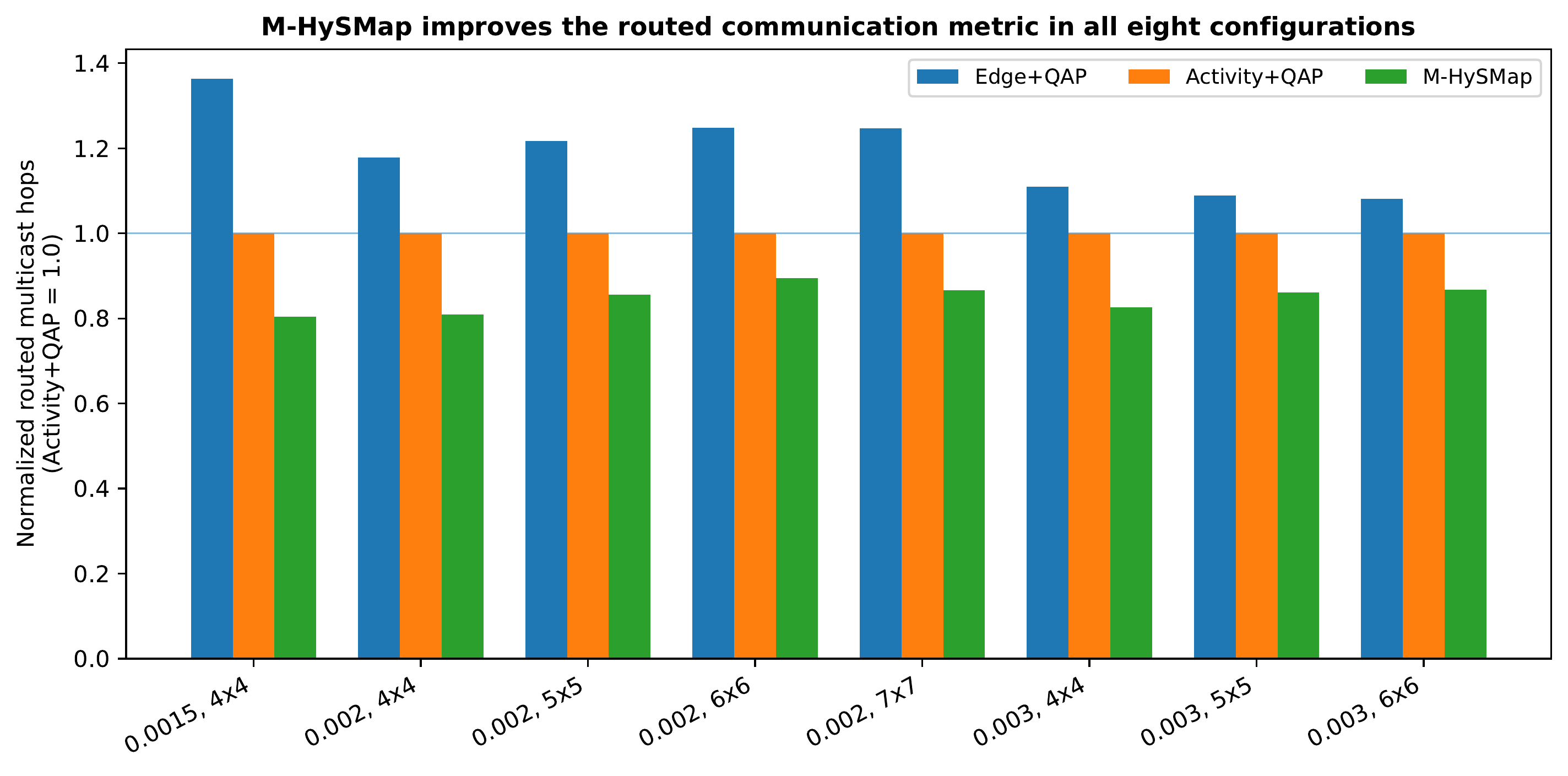}
\caption{Normalized routed multicast hops. Activity+QAP is 1.0 in each configuration; M-HySMap improves the routed metric in all eight configurations.}
\end{figure*}

\begin{table}[t]
\centering
\caption{Compact ablation summary averaged over configurations.}
\footnotesize
\setlength{\tabcolsep}{4pt}
\begin{tabular}{@{}lrrrr@{}}
\toprule
Method & Norm. & Hops & Obj. & Max link\\
\midrule
Edge+QAP & 1.192 & 61290 & 65458 & 1986.1\\
Activity+QAP & 1.000 & 55552 & 59391 & 1860.8\\
QAP-seeded M-HySMap & 0.857 & 47653 & 50085 & 1523.7\\
M-HySMap & 0.848 & 47527 & 49824 & 1479.3\\
\bottomrule
\end{tabular}
\end{table}

\section{Related Work and Positioning}
\subsection{SNN mapping and neuromorphic NoCs}
Existing SNN mapping flows commonly decompose the problem into clustering/partitioning followed by placement. SpiNeMap clusters and places SNN communication to reduce spike latency and energy~[1]. SNEAP targets NoC-based neuromorphic platforms and reduces average hop count~[7]. NeuToMa emphasizes topology-aware mapping and workload balance~[13], while NeuMap models communication patterns and uses metaheuristic placement for edge-AI workloads~[12]. These approaches motivate our strongest graph baseline: activity-aware partitioning plus topology-aware pairwise placement.

Ronzani and Silvano recently make the closely related case that SNN mapping should be raised from graphs to source-rooted hypergraphs~[14]. Their work studies hyperedge overlap/locality and redesigns partitioning and placement around the hypergraph abstraction. M-HySMap should therefore not be read as claiming that hypergraphs themselves are new to SNN mapping. The distinction in this work is the routed objective and the corresponding search machinery: destination-core sets are translated into unions of deterministic mesh routes; worst-link and load-variance terms are explicit; strong Activity+QAP seeds are refined under that route-aware objective; and move gains exploit the exact affected-source dependency set.

\subsection{Graph and hypergraph partitioning}
Kernighan--Lin and Fiduccia--Mattheyses are foundational local-refinement methods for graph partitioning~[5, 3]. Hypergraphs are natural when a communication object connects one source or net to many endpoints and have long been used in VLSI partitioning and sparse matrix decomposition~[4, 2, 11]. M-HySMap borrows the general local-refinement spirit but specializes the objective and dependency structure to source-rooted spike delivery on a routed mesh.

\subsection{Pairwise placement versus route-union placement}
The QAP abstraction matches logical communication flow to physical distance~[6]. In M-HySMap it serves two roles: as a strong graph baseline and as a seed generator. Its limitation is representational rather than algorithmic: $\sum_{ij}F_{ij}D_{\pi(i),\pi(j)}$ counts pairwise flows, whereas the hardware multicast cost can share links across multiple destinations of the same source. The two objectives are therefore correlated but not equivalent.

\section{Limitations and What the Results Do Not Claim}
The current evidence uses scaled Potjans-inspired recurrent workloads rather than a broad standard neuromorphic benchmark suite. The networks contain 79--163 neurons and approximately 238--1028 synapses, chosen to support controlled paired experiments across many seeds and mapper schedules. Larger full-system SNNs remain important future work.

The reported metrics are mapping-level traffic proxies. We do not convert routed-hop reductions into energy, latency, throughput, or cycle claims without a calibrated target architecture. Routing is deterministic XY; adaptive routing or hardware multicast policies would change the route-set operator in Eq.~(7). The formulation is deliberately modular: $T_u$ can be replaced by the route set induced by another deterministic architecture policy, and the objective can include additional calibrated terms.

Activity rates are obtained from a finite profiling interval and may vary across application phases. A natural extension is scenario-aware mapping with several activity vectors and robust or expected-cost objectives.

Finally, the search is heuristic. Greedy neuron moves and 2-swap placement searches guarantee monotone improvement along the moves they accept, not global optimality. The portfolio reduces sensitivity to one search trajectory but does not eliminate local minima. Time-budgeted placement also means exact incumbents can depend mildly on machine speed; deterministic evaluation-count budgets would improve archival reproducibility.

\begin{table*}[t]
\centering
\caption{Runtime benefit of exact incremental hyperedge updates.}
\small
\begin{tabular}{lrrrrr}
\toprule
Config & Full recompute (s) & Incremental (s) & Speedup & Avg. affected & Max error\\
\midrule
0.0015, $4\times4$ & 0.438 & 0.061 & 7.2$\times$ & 2.48 & $1.8\times10^{-12}$\\
0.002, $4\times4$ & 1.044 & 0.113 & 9.2$\times$ & 2.71 & $9.1\times10^{-12}$\\
0.002, $5\times5$ & 1.676 & 0.356 & 4.7$\times$ & 2.60 & $8.2\times10^{-12}$\\
0.002, $6\times6$ & 2.462 & 0.431 & 5.7$\times$ & 2.51 & $2.0\times10^{-11}$\\
0.002, $7\times7$ & 1.747 & 0.351 & 5.0$\times$ & 2.65 & $1.1\times10^{-11}$\\
0.003, $4\times4$ & 11.844 & 0.933 & 12.7$\times$ & 2.41 & $7.3\times10^{-11}$\\
0.003, $5\times5$ & 13.311 & 1.072 & 12.4$\times$ & 2.56 & $2.3\times10^{-10}$\\
0.003, $6\times6$ & 13.995 & 1.135 & 12.3$\times$ & 2.32 & $1.5\times10^{-10}$\\
\bottomrule
\end{tabular}
\end{table*}

\begin{figure*}[t]
\centering
\includegraphics[width=0.90\textwidth]{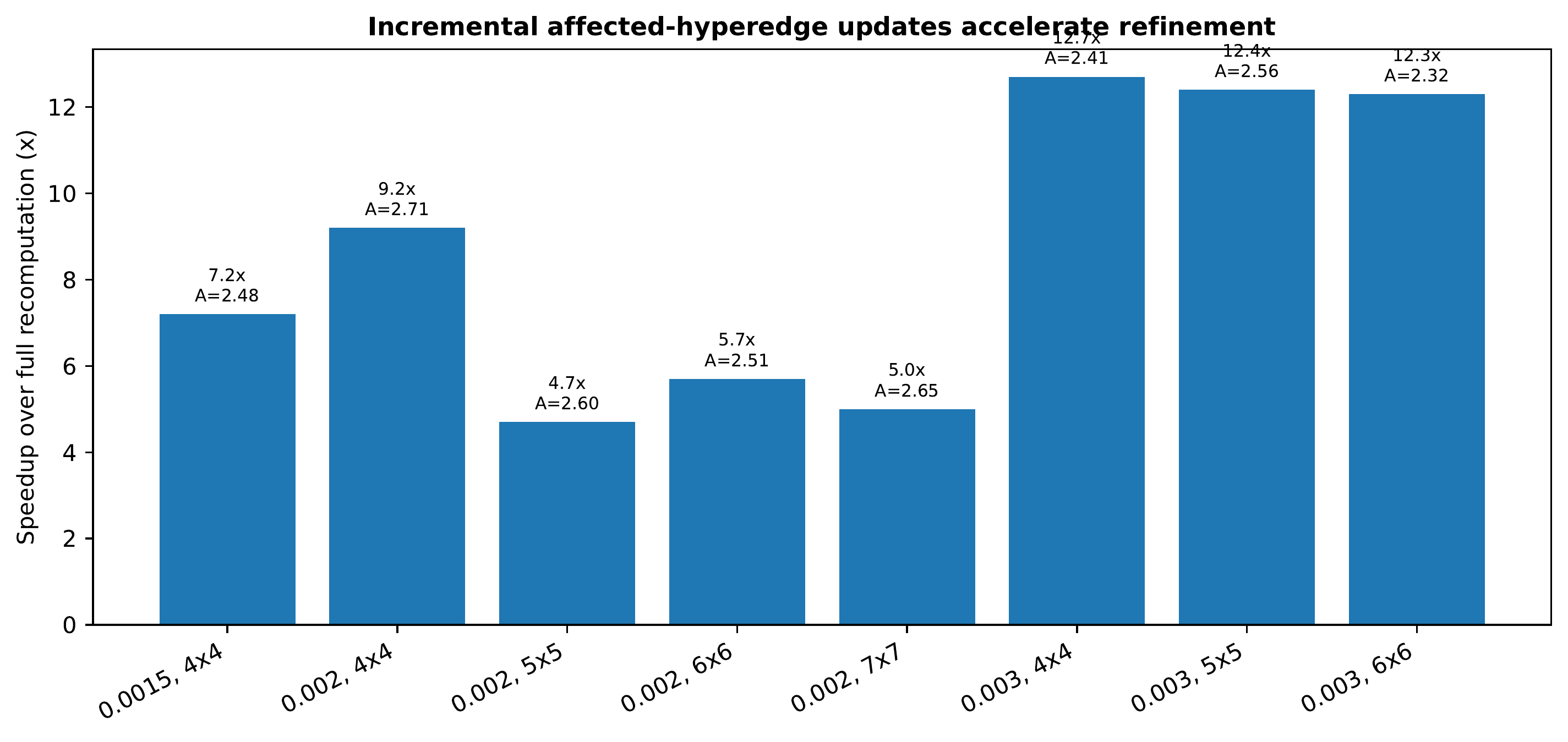}
\caption{Incremental update speedup. Labels also report the average number $A$ of affected source hyperedges, illustrating why locality is valuable.}
\end{figure*}

\section{Conclusion}
The main lesson is an abstraction lesson. SNN mapping is not only a problem of assigning weighted graph edges to nearby cores. A source spike is a one-to-many communication event, and its physical cost depends on distinct destination cores and the union of links used to reach them. Modeling each source as an activity-weighted directed hyperedge makes that structure explicit; optimizing the routed union turns it into a hardware-facing objective.

The algorithmic consequence is equally useful. Although the multicast objective looks global because it aggregates mesh congestion, one neuron move has a sharply local source dependency: $h_v$ and predecessor-rooted hyperedges only. Caching the rest converts that structural fact into exact incremental gains and 4.7--12.7$\times$ faster refinement. Strong Activity+QAP seeds, route-aware partition moves, placement swaps, and safe-best portfolio selection then provide a practical combinatorial search strategy.

Across the 115-job evidence suite, the resulting mappings reduce routed multicast hops by 10.6--19.6\% relative to the strongest activity-aware graph/QAP baseline while also lowering average worst-link load. The broader implication is that, for event-driven many-core mapping, choosing a communication object that matches the hardware semantics can matter as much as improving the optimizer applied to an older abstraction.

\section*{Reproducibility Notes}
The implementation is organized around fixed workload seeds, JSON experiment configurations, subprocess-isolated jobs, and per-run CSV logging. The heavy evidence configuration records profiling time, input rate, partition passes, placement restarts, placement budgets, joint cycles, and balance slack. A public code release should contain only the source and the exact evidence configuration/result files needed to reproduce the preprint, excluding stale development archives and intermediate logs not required for compilation or execution.


\begin{thebibliography}{99}\small
\bibitem{balaji2020} A. Balaji et al., ``Mapping Spiking Neural Networks to Neuromorphic Hardware,'' \emph{IEEE Trans. VLSI Systems}, vol. 28, no. 1, pp. 76--86, 2020. doi:10.1109/TVLSI.2019.2951493.

\bibitem{catalyurek1999} Ü. V. Çatalyürek and C. Aykanat, ``Hypergraph-Partitioning-Based Decomposition for Parallel Sparse-Matrix Vector Multiplication,'' \emph{IEEE TPDS}, vol. 10, no. 7, pp. 673--693, 1999. doi:10.1109/71.780863.

\bibitem{fiduccia1982} C. M. Fiduccia and R. M. Mattheyses, ``A Linear-Time Heuristic for Improving Network Partitions,'' in \emph{Proc. DAC}, pp. 175--181, 1982. doi:10.1145/800263.809204.

\bibitem{karypis1999} G. Karypis and V. Kumar, ``Multilevel $k$-Way Hypergraph Partitioning,'' in \emph{Proc. DAC}, pp. 343--348, 1999. doi:10.1145/309847.309954.

\bibitem{kernighan1970} B. W. Kernighan and S. Lin, ``An Efficient Heuristic Procedure for Partitioning Graphs,'' \emph{Bell System Technical Journal}, vol. 49, no. 2, pp. 291--307, 1970. doi:10.1002/j.1538-7305.1970.tb01770.x.

\bibitem{koopmans1957} T. C. Koopmans and M. Beckmann, ``Assignment Problems and the Location of Economic Activities,'' \emph{Econometrica}, vol. 25, no. 1, pp. 53--76, 1957. doi:10.2307/1907742.

\bibitem{li2020} S. Li et al., ``SNEAP: A Fast and Efficient Toolchain for Mapping Large-Scale Spiking Neural Network onto NoC-Based Neuromorphic Platform,'' arXiv:2004.01639, 2020.

\bibitem{maass1997} W. Maass, ``Networks of Spiking Neurons: The Third Generation of Neural Network Models,'' \emph{Neural Networks}, vol. 10, no. 9, pp. 1659--1671, 1997. doi:10.1016/S0893-6080(97)00011-7.

\bibitem{potjans2014} T. C. Potjans and M. Diesmann, ``The Cell-Type Specific Cortical Microcircuit: Relating Structure and Activity in a Full-Scale Spiking Network Model,'' \emph{Cerebral Cortex}, vol. 24, no. 3, pp. 785--806, 2014. doi:10.1093/cercor/bhs358.

\bibitem{roy2019} K. Roy, A. Jaiswal, and P. Panda, ``Towards Spike-Based Machine Intelligence with Neuromorphic Computing,'' \emph{Nature}, vol. 575, pp. 607--617, 2019. doi:10.1038/s41586-019-1677-2.

\bibitem{schlag2023} S. Schlag et al., ``High-Quality Hypergraph Partitioning,'' \emph{ACM Journal of Experimental Algorithmics}, vol. 27, 2023. doi:10.1145/3529090.

\bibitem{xiao2022neu} C. Xiao, J. Chen, and L. Wang, ``Optimal Mapping of Spiking Neural Network to Neuromorphic Hardware for Edge-AI,'' \emph{Sensors}, vol. 22, no. 19, 7248, 2022. doi:10.3390/s22197248.

\bibitem{xiao2022topo} C. Xiao, Y. Wang, J. Chen, and L. Wang, ``Topology-Aware Mapping of Spiking Neural Network to Neuromorphic Processor,'' \emph{Electronics}, vol. 11, no. 18, 2867, 2022. doi:10.3390/electronics11182867.

\bibitem{ronzani2026} M. Ronzani and C. Silvano, ``A Case for Hypergraphs to Model and Map SNNs on Neuromorphic Hardware,'' arXiv:2601.16118, 2026.
\end{thebibliography}
\end{document}